\newtheorem{theorem}{Theorem}[section]
\newtheorem{proposition}[theorem]{Proposition}
\newtheorem{lemma}[theorem]{Lemma}
\newtheorem{defin}[theorem]{Definition}
\newenvironment{definition}{\begin{defin}\em}{\end{defin}}
\newtheorem{defins}[theorem]{Definitions}
\newenvironment{proof}{\noindent \textbf{Proof: }}{\hfill
$\Box$  \vspace{1ex}}
\newtheorem{ex}[theorem]{Examples}
\newtheorem{exs}[theorem]{Example}
\newtheorem{rem}[theorem]{Remark}
\newtheorem{rems}[theorem]{Remarks}
\newtheorem{corollary}[theorem]{Corollary}
\def\cocoa{{\hbox{\rm C\kern-.13em o\kern-.07em C\kern-.13em o\kern-.15em A}}}
\def\F{\mathbb{F}}
\def\N_0{\mathbb{N}_0}
\def\lm{\textrm{lm}}
\def\ba{{\boldsymbol{a}}}
\def\ll{\prec}
\def\a{\alpha}
\def\fq{\mathbb{F}_q}
\def\fqx{\mathbb{F}_q[\mathbf{X}]}
\begin{document}

\begin{center}
{\Large\textbf{On the second Hamming weight of some Reed-Muller type codes}}
\end{center}
\vspace{3ex}

\noindent
\textsc{C\'{\i}cero  Carvalho}\footnote{The author is partially supported  by  CNPq grants 302280/2011-1 and   470416/2011-4, and by FAPEMIG proc.\  PPM-00127-12\ \ \  email: cicero@ufu.br \\
To appear in Finite Fields and Their Applications.} 
\\
{\em\small Faculdade de Matem\'atica, Universidade Federal de Uberl\^andia,  Av.\ J.\ N.\ \'Avila 2121, 38.408-902 - Uberl\^andia - MG, Brazil.}
\vspace{1ex}

\noindent
{\footnotesize \textbf{Abstract.} 
We study affine cartesian codes, which are a Reed-Muller type of evaluation codes, where polynomials are evaluated at the  cartesian product of $n$ subsets of a finite field $\mathbb{F}_q$. These codes appeared recently in a work by  H.\ L\'opez, C.\ Renter\'ia-Marquez and R.\ Villareal (see \cite{lrv}) and, independently, in a generalized form, in a work by O.\ Geil and C.\ Thomsen (see  \cite{gt}). Using a proof technique developed by O. Geil (see \cite{geil}) we  determine the second Hamming weight (also called next-to-minimal weight) for  particular cases of affine cartesian codes and also some higher Hamming weights of this type of code.
 }
\vspace{2ex}

\noindent
{\footnotesize \textbf{Keywords.} Affine variety  codes, affine cartesian codes, Hamming weights}
\vspace{2ex}

\noindent
\small{\textbf{AMS Classification:} 11T71, 13P25, 94B60, 

%

\section{Introduction}

Affine variety codes are evaluation codes which were introduced by J. Fitzgerald and R. F. Lax in \cite{f-lax} and their construction is as follows. Let $I \subset \fq[X_1, \ldots, X_n] =: \fqx$ be an ideal and set $I_q := I + (X_1^q - X_1, \ldots, X_n^q - X_n)$. Then the affine variety $V_{\fq}(I)$ defined by $I$  in $\fq^n$ coincides with the affine variety $V_{\overline{\fq}}(I_q)$ defined by $I_q$ in ${\overline{\fq}}^n$ (where $\overline{\fq}$ denotes an algebraic closure of $\fq$). 
Let $V_{\fq}(I) = \{P_1, \ldots, P_m\}$ and denote by $\varphi : \fqx/I_q \rightarrow \fq^m$  the evaluation morphism $\varphi(f + I_q) = (f(P_1), \ldots, f(P_m))$. 

\begin{definition}
Let $L$ be an $\F_q$-vector subspace of $\fqx/I_q$. The {\em affine variety code} $C(L)$ is the image $\varphi(L)$.
\end{definition}

In \cite{lrv}  L\'opez, Renter\'ia-Marquez and Villareal defined affine cartesian codes, a special type of affine variety codes,  in the following way. Let $A_1, \dots, A_n$ be non-empty subsets of $\F_q$ and let $X := A_1 \times \cdots \times A_n \subset \mathbb{F}_q^n$.
Let $f_i = \prod_{c \in A_i} (X_i - c)$ for $i = 1, \ldots, n$ and let $I = (f_1, \ldots, f_n)$,  clearly the set of zeroes of $I$ is $X$. Furthermore, 
$f_i$ is a factor of $X_i^q - X_i = \prod_{c \in \mathbb{F}_q} (X_i - c)$ for all $i = 1, \ldots, n$ so $I = I_q$. From $X_i^q \equiv X_i \pmod{I}$ for all $i = 1, \ldots, n$ we get $f^q \equiv f \pmod{I}$ for any $f \in \mathbb{F}_q[X_1, \ldots, X_n]$ hence $I$ is radical: in fact, if $f^r \in I$ then $f^s \in I$, where $s \in \{0, \ldots, q - 1\}$ is such that $s \equiv r \pmod{q}$, so that $f ^q \in I$ and a fortiori $f \in I$.  A similar reasoning shows that the ideal generated by $I$ in $\overline{\mathbb{F}_q}[X_1, \ldots, X_n]$ is radical so from Nullstellensatz $I$ is the ideal of the set $X$ (this was proved in a different way in \cite[Lemma 2.3]{lrv}).

\begin{definition}
Let $d$ be  a positive integer, the {\em affine cartesian code} $C(d)$ is the image, by $\varphi$, of the classes in 
$\fqx/I$
of the zero polynomial and of polynomials having degree up to $d$. 
\end{definition} 

A very important particular case of such codes is of course when $A_i = \fq$ for all $i = 1, \ldots, n$, for then we have the so-called generalized Reed-Muller codes. 

Let $d_i := \#(A_i)$ for $i = 1, \ldots, n$, 
in their study of affine cartesian codes L\'opez et al.\  proved that we may assume $2 \leq d_1 \leq \cdots \leq d_n$ and that the dimension of  $C(d)$ is equal to 
\begin{equation*}
\begin{split}
  & \binom{n + d }{d}  -   \sum_{i = 1}^n \binom{n + d - d_i}{d - d_i} + \cdots +   \\ & (-1)^j \sum_{1 \leq i_1 < \cdots < i_j \leq n} \binom{n + d - d_{i_1} - \cdots -   d_{i_j}}{d - d_{i_1} - \cdots -   d_{i_j}} + \cdots +   (-1)^n \binom{n + d - d_{1} - \cdots -   d_{n}}{d - d_{1} - \cdots -   d_{n}} 
\end{split}
\end{equation*}
where we set $\binom{a}{b} = 0$ if $b < 0$ (see \cite[Thm. 3.1]{lrv}).
The length of $C(d)$ is clearly $d_1 . \cdots . d_n$.  In \cite{lrv} it is also proved that the minimum distance $d_{\min}(C_d)$ of $C(d)$ is equal to  1 if $d \geq \sum_{i =1}^n (d_i -1)$, and there is  a formula for $d_{\min}(C_d)$ when $1 \leq d < \sum_{i =1}^n (d_i -1)$. 

In the next section we will determine the exact value of 
the second Hamming weight, also called next-to-minimal weight, for some particular cases of $C(d)$  as well as some   higher Hamming weights of these codes (see Theorems \ref{thm2} to \ref{thm3} and Corollary \ref{cor-last}). 
In the case of generalized Reed-Muller codes, the study of the values for the second Hamming weight was started by  J.-P. Cherdieu and
R. Rolland (see \cite{ch-rol}), and the complete determination of these values has been recently done by A.\ A.\  Bruen (see \cite{brue}). Bruen  discovered that these weights had already been determined in the Ph.D.\ thesis of D.\ Erickson (see \cite{eric}) for many values of $d$, and showed how the  remaining cases can be obtained from earlier works by him. The values of the weights for these remaining cases also follow from results in \cite{geil} or in \cite{rr}. The characterization of the second weight codewords of generalized Reed-Muller codes has just  been  completed by E.\ Leducq (see \cite{elodie} and the references therein for earlier results on this subject).

\section{Main  results}

Given an ideal $J \subset  \fqx$ and a monomial order $\prec$ in the set of monomials of $\fqx$ we denote by $\Delta(J)$ the {\em footprint} of $I$ with respect to $\prec$, i.e. $\Delta(J)$ is the set of monomials  in $\fqx$ which are not leading monomials of polynomials in $J$. From the definition of Gr\"obner basis (with respect to $\prec$) we get that a monomial is in $\Delta(J)$ if and only if it is not a multiple of any of the leading monomials of the polynomials in a Gr\"obner basis for $J$.  If $J = (g_1,\ldots, g_r)$ and we denote by $\Delta( \lm(g_1), \ldots, \lm(g_r) )$ the set of monomials of $\fqx$ which are not a multiple of the leading monomial of $g_i$ for all $i \in \{1, \ldots, r\}$ then $\Delta(J) \subset \Delta( \lm(g_1), \ldots, \lm(g_r) )$ (we will use this fact in the proofs of Proposition \ref{thm1} and Theorem \ref{thm2}). A well-known property of the footprint is that the classes of the elements of $\Delta(J)$ are a basis for $\fqx/J$ as an $\fq$-vector space (see e.g.\ \cite[Prop.\ 6.52]{becker}).  Also, when $\Delta(J)$ is a finite set we get  $\#(V_{\overline{\fq}}(J)) \leq \#(\Delta(J))$, and equality holds when $J$ is a radical ideal (see \cite[Thm. 8.32]{becker}).

In what follows we will use the graded-lexicographic order $\ll$ which is defined on the monomials of $\fqx$ by setting  $X_1^{m_1} \cdots X_n^{m_n} \ll X_1^{t_1} \cdots X_n^{t_n}$ if and only if $\sum_{i = 1}^n m_i < \sum_{i = 1}^n t_i$ or, if $\sum_{i = 1}^n m_i = \sum_{i = 1}^n t_i$, then the  leftmost nonzero entry in $(t_1 - m_1, \ldots, t_n - m_n)$ is positive. Observe that using this order we get $\lm(f_i) = X_i^{d_i}$ for all $i = 1, \ldots, n$ and since $X_i^{d_i}$ and $X_j^{d_j}$ are relatively prime for all distinct $i, j \in \{1. \ldots, n\}$ we have that $\{f_1, \ldots, f_n\}$ is a Gr\"obner basis for $I = I_q$ (see \cite[Prop.\ 4, p.\ 104]{iva}), thus
\[
\Delta(I) = \{ X_1^{ a_1} . \cdots . X_n^{a_n} \; | \; 0 \leq a_i < d_i \; \forall\; i = 1, \ldots, n\}.
\]
Given $F \in \fqx$ let $R \in \fqx$ be its remainder in the division by $\{f_1, \ldots, f_n\}$, then $\varphi(F + I) = \varphi(R + I)$ and from the division algorithm we get that $\deg R \leq \deg F$. This shows that  
\[
C(d) = \varphi(\langle \Delta(I)_{\leq d} \rangle)
\]
where $\langle \Delta(I)_{\leq d} \rangle$ is the $\mathbb{F}_q$-vector space generated by $\Delta(I)_{\leq d} = \{ M \in \Delta(I) \; | \; \deg(M) \leq d\}$. We note that this gives a proof that $d_{\min}(C(d)) = 1$ if $d \geq \sum_{i = 1}^n (d_i - 1)$: in fact one may show that there are polynomials $F_1, \ldots, F_m$ (where $m := d_1. \cdots . d_n$) such that $F_i(P_j) = \delta_{i j}$ for all $i, j \in \{1, \ldots, m\}$ (see e.g. \cite[p. 406]{dgm}) so that $\varphi$ is surjective and a fortiori an isomorphism because $\dim_{\mathbb{F}_q} (\fqx/I) = \#(\Delta(I)) = m$, so from $\Delta(I) = \Delta(I)_{\leq d} $ for all $d \geq \sum_{i = 1}^n (d_i - 1)$ we get $C(d) =\mathbb{F}_q^m$ for all $d \geq \sum_{i = 1}^n (d_i - 1)$.

We will need the following two lemmas in the proof of the main results.

\begin{lemma} \label{lema1} Let $ 0 < d_1 \leq \cdots \leq d_n$ and $0 \leq s \leq \sum_{i = 1}^n (d_i - 1)$ be integers. Let $m(a_1, \ldots , a_n) = \prod_{i = 1}^n (d_i - a_i)$, where $0 \leq a_i < d_i$ is  an integer for all $i = 1,\ldots, n$. Then
\[ 
\min \{ m(a_1, \ldots, a_n) \, | \, a_1 + \cdots + a_n \leq s \} =  (d_{k + 1} - \ell) \prod_{i = k + 2}^n d_i
\]
where $k$ and $\ell$ are uniquely defined by $s = \sum_{i =1}^k (d_i -1) + \ell $,  with $0 \leq \ell < d_{k + 1} - 1$ (if $s < d_1 - 1$ then take $k = 0$ and $\ell = s$, if $k + 1 = n$ then we understand that $\prod_{i = k + 2}^n d_i = 1$). 
\end{lemma}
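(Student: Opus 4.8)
The plan is to prove the formula by induction on $n$, exploiting the fact that the function $m(a_1,\ldots,a_n)=\prod_{i=1}^n(d_i-a_i)$ is separable (a product of one-variable factors) and that the constraint $a_1+\cdots+a_n\le s$ couples the variables only through their sum. First I would dispose of the trivial cases: if $s=0$ the only admissible point is $(0,\ldots,0)$ and the minimum is $\prod d_i$, which matches $k=\ell=0$; and if $n=1$ the minimum of $d_1-a_1$ subject to $a_1\le s<d_1$ is $d_1-s$, matching $k=0,\ \ell=s$. For the inductive step I would fix the value of $a_n=a$ (ranging over $0\le a\le\min\{s,d_n-1\}$) and write
\[
\min\{m(a_1,\ldots,a_n)\mid \textstyle\sum a_i\le s\}
=\min_{0\le a\le\min\{s,d_n-1\}}\Bigl[(d_n-a)\cdot M_{n-1}(s-a)\Bigr],
\]
where $M_{n-1}(s')$ denotes the analogous minimum in $n-1$ variables with bounds $d_1,\ldots,d_{n-1}$, available from the induction hypothesis (note $s-a\le\sum_{i=1}^n(d_i-1)-a$, but one must check $s-a\le\sum_{i=1}^{n-1}(d_i-1)$ holds in the relevant range, or else interpret $M_{n-1}$ as the unconstrained minimum $1$ when $s-a$ exceeds that bound).

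The heart of the argument is then a discrete convexity / exchange comparison: I would show that it is never advantageous to put weight on the last coordinate before the first $n-1$ coordinates are "saturated" in the greedy sense described by the $(k,\ell)$ decomposition. Concretely, one compares the candidate obtained by the greedy choice — fill $a_1=d_1-1,\ldots,a_k=d_k-1$, then $a_{k+1}=\ell$, leaving $a_{k+2}=\cdots=a_n=0$, giving value $(d_{k+1}-\ell)\prod_{i=k+2}^n d_i$ — against any competitor, using the elementary inequality that for positive integers, replacing a factor pair $(d_i-a_i,\,d_j-a_j)$ by $(d_i-a_i-1,\,d_j-a_j+1)$ decreases the product precisely when $d_i-a_i\le d_j-a_j$; since the $d_i$ are sorted in increasing order, moving "deficit" into the smaller-indexed (smaller $d_i$) coordinates first is optimal. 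Iterating this exchange reduces any admissible point to the greedy one without increasing $m$, and a short check shows the greedy point has the claimed value and that no admissible point beats it.

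The main obstacle I anticipate is bookkeeping around the boundary constraints $0\le a_i<d_i$ interacting with the total-sum constraint: the exchange step $(a_i,a_j)\mapsto(a_i+1,a_j-1)$ is only legal while $a_i<d_i-1$ and $a_j>0$, so one has to argue that the sequence of legal exchanges actually terminates at the greedy configuration and not at some other local optimum — this is where the precise form of the $(k,\ell)$ decomposition (with $0\le\ell<d_{k+1}-1$, so the $(k+1)$-st coordinate is never pushed all the way to $d_{k+1}-1$) must be used to see the greedy point is well-defined and unique. A secondary subtlety is verifying that the induction hypothesis applies to $M_{n-1}(s-a)$ with the correctly re-sorted list $d_1\le\cdots\le d_{n-1}$ and the correct range of $s-a$; handling the case $s-a>\sum_{i=1}^{n-1}(d_i-1)$ (forcing the $n-1$-variable minimum to be $1$) cleanly is a minor but necessary case distinction. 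Once these are in place the formula follows, and I would also remark that uniqueness of $(k,\ell)$ is immediate from the representation of $s$ in the "mixed-radix"-like system determined by the $d_i-1$.
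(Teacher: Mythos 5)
Your overall strategy --- an exchange argument pushing any admissible point toward the greedy configuration $a_1=d_1-1,\ldots,a_k=d_k-1$, $a_{k+1}=\ell$, $a_j=0$ for $j>k+1$ --- is exactly the paper's, but the specific exchange you propose has a genuine gap, and it is not the bookkeeping issue you flagged. You move one unit at a time, $(a_i,a_j)\mapsto(a_i+1,a_j-1)$ with $i<j$, and correctly note that this changes the factor pair $(x,y)=(d_i-a_i,d_j-a_j)$ to $(x-1,y+1)$, which does not increase the product only when $x\le y+1$. The problem is that on the way to the greedy point this condition fails at intermediate steps: the single-unit path can go strictly up before coming down, and a point can be a strict local minimum for single-unit exchanges without being globally minimal. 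Concretely, take $d_1=2$, $d_2=d_3=5$, $s=5$. The greedy point is $(1,4,0)$ with $m=1\cdot1\cdot5=5$, but the point $(1,1,3)$ has $m=1\cdot4\cdot2=8$, and every legal single-unit move from it into a smaller index either is blocked ($a_1$ is already saturated) or increases $m$ (moving a unit from $a_3$ to $a_2$ gives $(1,2,2)$ with $m=9$). So ``iterating this exchange reduces any admissible point to the greedy one without increasing $m$'' is false as stated, and no choice of order of single-unit moves repairs it.

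The fix --- and this is what the paper does --- is to perform a \emph{bulk} exchange between a pair of coordinates $i_1<i_2$ with $a_{i_1}<d_{i_1}-1$ and $a_{i_2}>0$: either transfer all of $a_{i_2}$ into $a_{i_1}$ (when $a_{i_1}+a_{i_2}\le d_{i_1}-1$), or transfer exactly enough to saturate $a_{i_1}$ at $d_{i_1}-1$. In the first case the product drops by $\bigl(a_{i_1}a_{i_2}+(d_{i_2}-d_{i_1})a_{i_2}\bigr)\prod_{i\ne i_1,i_2}(d_i-a_i)\ge0$ (using $d_{i_1}\le d_{i_2}$), and in the second by $(d_{i_1}-a_{i_1}-1)(d_{i_2}-a_{i_2}-1)\prod_{i\ne i_1,i_2}(d_i-a_i)\ge0$; each such move is monotone in one shot, so iterating them (left to right) lands at the greedy point without ever increasing $m$. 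Your auxiliary induction on $n$ with the one-variable minimization of $(d_n-a)M_{n-1}(s-a)$ could in principle be carried through instead, but you have not done so, and as written the core exchange step is the part that fails. I recommend replacing the unit exchange by the two bulk moves above; the rest of your outline (reduction to $\sum a_i=s$, identification of the greedy point's value, uniqueness of $(k,\ell)$) is fine.
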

\begin{proof}
We start by observing that the minimum must be attained when $\sum_{i = 1}^n a_i = s$. Thus, let  $\ba = (a_1, \ldots, a_n)$, with $\sum_{i = 1}^n a_i = s$ be such that $a_{i_1} < d_{i_1} - 1$ for some $i_1 \in\{1, \ldots, n\}$. If there exists $i_2 \in \{1, \ldots , n\}$ such that $i_1 < i_2$, $a_{i_2} > 0$  and $a_{i_1} + a_{i_2} \leq d_{i_1} - 1$, then denoting by $\ba'$  the $n$-tuple obtained from $\ba$ by replacing $a_{i_1}$ by $a_{i_1} + a_{i_2}$ and $a_{i_2}$ by 0, we get that 
\[
m(\ba) - m(\ba') = (a_{i_1} a_{i_2} + (d_{i_2} - d_{i_1})a_{i_2})  \prod_{\stackrel{i = 1}{i\neq i_1, \, i_2}}^n (d_i - a_i) \geq 0
\] 
so that $m(\ba') \leq m(\ba)$, and note that $m(\ba') < m(\ba)$ if $a_{i_1} > 0$. If there exists $i_2 \in \{1, \ldots , n\}$ such that $i_1 < i_2$, $a_{i_2} > 0$  and $a_{i_1} + a_{i_2} > d_{i_1} - 1$, then denoting by $\ba''$  the $n$-tuple obtained from $\ba$ by replacing $a_{i_1}$ by $d_{i_1} - 1$ and $a_{i_2}$ by $a_{i_2} - (d_{i_1} - a_{i_1} -  1)$ we get that 
\[
m(\ba) - m(\ba'') = (d_{i_1} - a_{i_1} - 1)(d_{i_2} - a_{i_2} - 1) \prod_{\stackrel{i = 1}{i\neq i_1, \, i_2}}^n (d_i - a_i) \geq 0
\]
so that $m(\ba'') \leq m(\ba)$. This proves that  $m$ attains its minimum at $\ba = (a_1, \ldots, a_n)$  where $a_i = d_i - 1$ for $i \in \{1, \ldots, k\}$, $a_{k + 1} = \ell$ and $a_j = 0$ for $j > k + 1$.
\end{proof}

\begin{lemma} \label{lema2}
Let $\,2 \leq s \leq  d_1 \leq \cdots \leq d_n$ be integers, with $n \geq 2$. Let $q(a_1, \ldots , a_n) = \prod_{i = 1}^n (d_i - a_i)$ where $0 \leq a_i < s$ is  an integer for all $i = 1,\ldots, n$. Then
\[
\min \{ q(a_1, \ldots, a_n) \, | \, a_1 + \cdots + a_n \leq s \} =  (d_1 - (s -1)) (d_2 -1) \prod_{i = 3}^n d_i.
\]
\end{lemma}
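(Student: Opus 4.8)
The plan is to peel off the cases where the defining conditions of the problem force $q$ away from its optimum, reducing an arbitrary feasible $\ba=(a_1,\dots,a_n)$ to the single candidate $(s-1,1,0,\dots,0)$, and then to finish with a one-variable optimization. As a first step I would note that it suffices to minimize $q$ over the tuples with $a_1+\cdots+a_n=s$ exactly: if a minimizer had $a_1+\cdots+a_n<s$, then some coordinate would satisfy $a_i\le s-2$ (otherwise all $a_i$ equal $s-1$ and $a_1+\cdots+a_n=n(s-1)\ge s$, a contradiction), and replacing $a_i$ by $a_i+1$ keeps the tuple feasible while strictly decreasing $q$, since every factor satisfies $d_j-a_j\ge d_1-(s-1)\ge 1$.

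Next, exactly as in Lemma~\ref{lema1} I would exploit $d_1\le\cdots\le d_n$ to normalize the shape of a minimizer. Swapping the values of two coordinates $i<j$ with $a_i<a_j$ changes $q$ through the factor $\bigl[(d_i-a_j)(d_j-a_i)\bigr]/\bigl[(d_i-a_i)(d_j-a_j)\bigr]$, and a one-line computation gives $(d_i-a_j)(d_j-a_i)-(d_i-a_i)(d_j-a_j)=(a_j-a_i)(d_i-d_j)\le 0$; since the swap preserves the multiset $\{a_1,\dots,a_n\}$, hence feasibility, we may assume $a_1\ge a_2\ge\cdots\ge a_n$. I then claim such a sorted minimizer has $a_i=0$ for all $i\ge 3$: if not, pick the largest $j\ge 3$ with $a_j>0$; then $a_2\ge a_j\ge 1$ and $a_3+\cdots+a_n\ge 1$, so $a_1\le s-2$, and moving one unit from coordinate $j$ to coordinate $1$ produces a feasible tuple for which the value of $q$ changes by $\prod_{i\neq 1,j}(d_i-a_i)\cdot\bigl[(d_1-a_1)-(d_j-a_j)-1\bigr]$, which is strictly negative because $a_1\ge a_j$ and $d_1\le d_j$, contradicting minimality.

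Once $a_i=0$ for $i\ge 3$ we have $a_1+a_2=s$ with $1\le a_1\le s-1$, so $q(\ba)=(d_1-a_1)(d_2-s+a_1)\prod_{i=3}^n d_i$. The map $a_1\mapsto(d_1-a_1)(d_2-s+a_1)$ is a concave quadratic, hence on the integer interval $[1,s-1]$ it attains its minimum at an endpoint; comparing the two endpoints via $(d_1-1)(d_2-s+1)-(d_1-s+1)(d_2-1)=(s-2)(d_2-d_1)\ge 0$ shows the minimum is at $a_1=s-1$, giving the value $(d_1-(s-1))(d_2-1)\prod_{i=3}^n d_i$, which is indeed realized by the feasible tuple $(s-1,1,0,\dots,0)$.

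I expect the real obstacle to be exactly the upper bounds $a_i<s$, which are what make this lemma genuinely different from Lemma~\ref{lema1}: they forbid the fully concentrated tuple $(s,0,\dots,0)$ that would otherwise be optimal, and they mean that a general ``push the mass toward coordinate $1$'' move need not decrease $q$. The feature that rescues the argument is that such a move does decrease $q$ once the tuple has been sorted, so the order of operations — symmetrize/sort first, concentrate second — is essential; the only delicate bookkeeping is checking that each exchange step stays inside the box $\{0\le a_i\le s-1\}$, which is where the hypotheses $s\ge 2$, $n\ge 2$, and $s\le d_1$ get used.
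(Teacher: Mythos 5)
Your proof is correct and follows essentially the same strategy as the paper's: an exchange argument that reduces to the case $a_1+\cdots+a_n=s$ and then transfers mass between coordinates without increasing $q$ until the tuple becomes $(s-1,1,0,\ldots,0)$. The only organizational difference is that you first sort the coordinates via a swap argument and then use unit transfers plus a concave-quadratic endpoint comparison, whereas the paper performs two kinds of bulk transfers directly into the first coordinate and leaves the final step ($a_2=1$) as an easy check; both routes respect the box constraint $a_i<s$ correctly.
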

\begin{proof} As in the previous Lemma we observe that the minimum must be attained when $\sum_{i = 1}^n a_i = s$. Thus, let  $\ba = (a_1, \ldots, a_n)$, with $\sum_{i = 1}^n a_i = s$ and assume that $a_1 < s - 1$. If there exists $i_2 \in \{2, \ldots, n\}$ such that $a_{i_2} > 0$ and $a_1 + a_{i_2} \leq s- 1$ then denoting by $\ba'$  the $n$-tuple obtained from $\ba$ by replacing $a_1$ by $a_1 + a_{i_2}$ and $a_{i_2}$ by 0, we get that 
\[
m(\ba) - m(\ba') = (a_1 a_{i_2} + (d_{i_2} - d_1)a_{i_2}) \prod_{\stackrel{i = 2}{i\neq  i_2}}^n (d_i - a_i) \geq 0
\]
so   $m(\ba) \geq m(\ba')$ and  $m(\ba) > m(\ba')$ if $a_1 \neq 0$. If there exists $i_2 \in \{2, \ldots, n\}$ such that $a_1 + a_{i_2} > s - 1$ then we must have $a_1 > 0$ and $a_{i_2} = s - a_1$,  denoting by $\ba''$  the $n$-tuple obtained from $\ba$ by replacing $a_1$ by $s - 1$ and $a_{i_2}$ by 1 we get 
\[
m(\ba) - m(\ba'') =  (d_{i_2} - d_1 + a_1 - 1)(s - a_1 - 1) \prod_{\stackrel{i = 2}{i\neq  i_2}}^n (d_i - a_i) \geq 0.
\]
This shows that if $q$ attains its minimum at $\ba = (a_1, \ldots, a_n)$ then we may assume that $a_1 = s - 1$ and now it is easy to check that we can also assume $a_2 = 1$.
 \end{proof}

As mentioned in the introduction, in \cite{lrv} the authors find a formula for the minimum distance of affine cartesian codes is determined (see \cite[Thm. 3.8]{lrv}).  The determination of this formula occupies most of the paper, the result being preceded   by several technical  lemmas. In following we present a simple proof of this result which we will use in the main results. We also note that in \cite[Prop.\ 5]{gt} there is  a formula for the minimum distance of certain codes which may be seen as a generalization of affine cartesian codes.

\begin{proposition}\label{thm1} The minimum distance of the affine cartesian
code $C(d)$ defined over $X = A_1 \times \cdots \times A_n$, with $d_i := \#(A_i)$ for all $i = 1,\ldots, n$, is $d_{\min} = (d_{k + 1} - \ell)  \prod_{i = k + 2}^n d_i$, where $d < \sum_{i = 1}^n (d_i - 1)$ and $k$ and $\ell$ are uniquely defined by $d = \sum_{i = 1}^k (d_i -1) + \ell$, with $0 \leq \ell < d_{k + 1} - 1$ (if $d < d_1 - 1$ then take $k = 0$ and $\ell = s$, if $k + 1 = n$ we understand that $\prod_{i = k + 2}^n d_i = 1$).
\end{proposition}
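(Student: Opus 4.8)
The plan is to translate the problem into a statement about the footprint and then apply Lemma~\ref{lema1}. Recall from the discussion above that $C(d) = \varphi(\langle \Delta(I)_{\leq d}\rangle)$, that $\varphi$ is an isomorphism onto a subspace of $\F_q^m$, and that $\Delta(I)$ consists exactly of the monomials $X_1^{a_1}\cdots X_n^{a_n}$ with $0 \le a_i < d_i$. So the weight of a codeword $\varphi(F+I)$ equals $m - \#(V_{\F_q}(F) \cap X)$, i.e. $m$ minus the number of points of $X$ where $F$ vanishes; minimizing the weight over nonzero $F$ of degree $\le d$ is the same as maximizing the number of common zeros in $X$ of such an $F$.

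For the lower bound (i.e. $d_{\min} \geq (d_{k+1}-\ell)\prod_{i=k+2}^n d_i$), I would use the footprint bound on $\langle \Delta(I)_{\leq d}\rangle$. Given a nonzero $F$ of degree $\le d$, let $R$ be its remainder modulo $\{f_1,\dots,f_n\}$, so $\varphi(F+I)=\varphi(R+I)$, $R\neq 0$ and $\deg R \le d$; let $X_1^{a_1}\cdots X_n^{a_n} =: M$ be the leading monomial of $R$, which lies in $\Delta(I)_{\le d}$, so $0\le a_i<d_i$ and $a_1+\cdots+a_n\le d$. Now consider the ideal $J := I + (R)$. Its zero set (over $\overline{\F_q}$, equivalently over $\F_q$ since everything is cut out inside $X$) is exactly the set of points of $X$ where $R$ vanishes, so $\#(V_{\F_q}(J)) = m - w$ where $w$ is the weight of the codeword. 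On the other hand $\{f_1,\dots,f_n,R\}$ generates $J$, its leading monomials include $X_1^{d_1},\dots,X_n^{d_n}$ and $M$, so $\Delta(J) \subseteq \Delta(\lm(f_1),\dots,\lm(f_n),\lm(R))$, and the latter set has size at most $\prod_{i=1}^n d_i - \prod_{i=1}^n (d_i - a_i) = m - m(a_1,\dots,a_n)$ (the monomials of $\Delta(I)$ that are multiples of $M$ are exactly those $X_1^{b_1}\cdots X_n^{b_n}$ with $a_i \le b_i < d_i$, of which there are $\prod (d_i - a_i)$). Since $J$ need not be radical I only get $\#(V_{\F_q}(J)) \le \#(\Delta(J)) \le m - m(a_1,\dots,a_n)$, hence $w \ge m(a_1,\dots,a_n) \ge \min\{m(a_1,\dots,a_n) : a_1+\cdots+a_n \le d\}$, which by Lemma~\ref{lema1} equals $(d_{k+1}-\ell)\prod_{i=k+2}^n d_i$.

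For the matching upper bound I would exhibit an explicit $F$ of degree $\le d$ achieving this many zeros in $X$. Using the multi-index $(a_1,\dots,a_n)$ with $a_i = d_i-1$ for $i\le k$, $a_{k+1}=\ell$, $a_j=0$ for $j>k+1$ (the minimizer found in the proof of Lemma~\ref{lema1}), take $F = \prod_{i=1}^k \Big(\prod_{c\in B_i}(X_i-c)\Big)\cdot \prod_{c\in B_{k+1}}(X_{k+1}-c)$ where $B_i\subseteq A_i$ has $a_i$ elements. Then $\deg F = a_1+\cdots+a_{k+1} = d$, $F$ is a nonzero polynomial, and $F$ vanishes at a point of $X$ exactly when some coordinate among the first $k+1$ lands in the corresponding $B_i$; a quick inclusion–exclusion (or direct count of the complement) shows the number of points of $X$ where $F$ does \emph{not} vanish is $\prod_{i=1}^k (d_i-a_i)\cdot (d_{k+1}-\ell)\cdot \prod_{i=k+2}^n d_i = (d_{k+1}-\ell)\prod_{i=k+2}^n d_i$, so the weight of $\varphi(F+I)$ is exactly $m$ minus that, matching the lower bound.

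The main obstacle is the first part: one must be careful that the footprint bound $\#(V_{\F_q}(J)) \le \#(\Delta(J))$ is being applied to the possibly non-radical ideal $J = I+(R)$, so only the inequality is available (which is all we need), and one must correctly identify $\Delta(\lm(f_1),\dots,\lm(f_n),\lm(R))$ — in particular that adjoining the single monomial $M=\lm(R)$ removes exactly the $\prod(d_i-a_i)$ multiples of $M$ sitting inside the box $\Delta(I)$, and that $M \in \Delta(I)_{\le d}$ forces the constraint $a_1+\cdots+a_n\le d$ needed to invoke Lemma~\ref{lema1}. The upper-bound construction is routine combinatorics once the extremal multi-index is in hand.
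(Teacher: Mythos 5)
Your proof is correct and follows essentially the same route as the paper's: the footprint bound $\#(V(J))\leq\#(\Delta(J))\leq\#(\Delta(\lm(R),X_1^{d_1},\ldots,X_n^{d_n}))$ combined with Lemma~\ref{lema1} for the lower bound, and the same product-of-linear-factors polynomial built from the extremal multi-index for the matching upper bound. The only cosmetic difference is that you pass explicitly to the remainder $R$ of $F$ modulo $\{f_1,\ldots,f_n\}$, where the paper simply takes $F$ to be a sum of monomials of $\Delta(I)_{\leq d}$ from the outset.
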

\begin{proof}Let $F \in \fq[X_1,\ldots, X_n]$ be  a polynomial 
which is a sum of monomials in $\Delta(I)_{\leq d}$ 	
and let $J_F := (F, f_1, \ldots, f_n)$,  the weight of the codeword $\varphi(F + I)$ then satisfies $w(\varphi(F + I)) = \prod_{i = 1}^n d_i - \#(V_{\mathbb{F}_q} (J_F))$. Since $\#(V_{\mathbb{F}_q} (J_F)) \leq \#(\Delta(J_F)) \leq \#( \Delta(\lm(F), X_1^{d_1}, \ldots, X_n^{d_n}))$ we get that $\prod_{i = 1}^n d_i - \#( \Delta(\lm(F), X_1^{d_1}, \ldots, X_n^{d_n}))$ is a lower bound for $w(\varphi(F + I))$. 
Let $\lm(F) = X_1^{a_1} . \cdots . X_n^{a_n}$, 
from  $\#(\Delta(\lm(F), X_1^{d_1},\ldots, X_n^{d_n}) = \prod_{i = 1}^n d_i -  \prod_{i = 1}^n (d_i - a_i)$ we get $w(\varphi(F + I)) \geq \prod_{i = 1}^n (d_i - a_i)$.  Letting $(a_1,\ldots, a_n)$ run over all $n$-tuples such that $\sum_{i = 1}^n a_i \leq d$ we get	 from Lemma \ref{lema1} that $(d_{k + 1} - \ell)  \prod_{i = k + 1}^n d_i$ is a lower bound for the minimum distance of $C(d)$. To see that this lower bound is attained we write $A_i = \{ \a_{i 1}, \ldots, \a_{i d_i} \}$ for all $i = 1, \ldots, n$ and let $G(X_1, \ldots, X_n) = \prod_{i = 1}^n \prod_{j = 1}^{a_i} (X_i - \a_{i j})$, then $G(X_1, \ldots, X_n)$ is a polynomial with leading monomial equal to $X_1^{a_1}.\cdots . X_n^{a_n}$ which has $\prod_{i = 1}^n d_i -  \prod_{i = 1}^n (d_i - a_i)$ zeroes, all in $X$.
\end{proof}

We will now determine the second Hamming weight of codes $C(d)$ for several particular cases of this code. We start with the case where all the sets in the cartesian product have the same cardinality $a$ and $2 \leq d < a$ (hence $a \geq 3$). The proof of the following theorem is an enhancement of the proofs of \cite[Prop.\ 2 and Thm.\ 3]{geil}.

\begin{theorem}\label{thm2} Let $A_i \subset \mathbb{F}_q$ such that $\#(A_i) = a  \geq 3$ for all $i = 1, \ldots, n$, with $n \geq 2$ and let $2 \leq d < a$. The second Hamming weight of $C(d)$ is $(a - (d - 1))(a -1) a^{n - 2}$. 
\end{theorem}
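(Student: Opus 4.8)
The strategy is to follow Geil's footprint-bound technique, now applied to codewords that are \emph{not} of minimal weight, and to squeeze the lower bound down to the claimed value by a careful case analysis on the leading monomial. Write $X = A_1 \times \cdots \times A_n$ with $\#(A_i) = a$ for all $i$. As in the proof of Proposition \ref{thm1}, a codeword of $C(d)$ is $\varphi(F+I)$ where $F$ is a sum of monomials in $\Delta(I)_{\leq d}$, and if $\lm(F) = X_1^{a_1}\cdots X_n^{a_n}$ then $w(\varphi(F+I)) \geq \prod_{i=1}^n (a - a_i)$ via the footprint estimate $\#(V_{\F_q}(J_F)) \leq \#(\Delta(\lm(F), X_1^{a},\ldots,X_n^{a}))$. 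First I would dispose of the monomials $\lm(F)$ whose exponent vector $(a_1,\ldots,a_n)$ already forces $\prod_i (a-a_i)$ to exceed the target $(a-(d-1))(a-1)a^{n-2}$; by Lemma \ref{lema1} (with $d_i = a$ and $s = d$) the only exponent vectors achieving the minimum distance $d_{\min} = (a-(d-1))a^{n-1}$ are, up to reordering, $(d-1,0,\ldots,0)$, and the next candidate values of $\prod_i(a-a_i)$ under the constraint $\sum a_i \le d$ must be enumerated. The point is that the second-smallest value of $\prod_i(a-a_i)$ over $\{\sum a_i \le d,\ 0\le a_i < a\}$ is exactly $(a-(d-1))(a-1)a^{n-2}$, attained for instance at $(d-2,1,0,\ldots,0)$; this is a small optimization argument in the same spirit as Lemmas \ref{lema1} and \ref{lema2}.

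So the only genuine difficulty is the borderline case: codewords $F$ whose leading monomial is (a reordering of) $X_1^{d-1}$, for which the footprint bound only gives $w \geq (a-(d-1))a^{n-1}$, strictly below the target. For these I would argue that the footprint bound is not tight unless $F$ actually produces a minimal-weight codeword, and that a minimal-weight codeword cannot be ruled in here because we are looking at the \emph{second} weight — i.e.\ one must show that if $\lm(F) = X_1^{d-1}$ (after permuting coordinates) and $w(\varphi(F+I)) < (a-(d-1))(a-1)a^{n-2}$, then in fact $w(\varphi(F+I)) = d_{\min}$. This is the heart of the matter and where Geil's refinement enters: one writes $F$, modulo lower-order terms, as $c\cdot X_1^{d-1} + (\text{monomials of degree} \le d-1 \text{ not divisible by } X_1^{d-1})$, reduces to understanding $F$ as a polynomial in $X_1$ of degree $d-1$ with coefficients in $\F_q[X_2,\ldots,X_n]$, and analyzes how many of the $a^{n-1}$ ``fibers'' over points of $A_2\times\cdots\times A_n$ contribute the full $a - (d-1)$ zeros versus strictly fewer. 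A fiber contributes fewer than $a-(d-1)$ zeros exactly when the specialized univariate polynomial drops degree or fails to split completely; if this happens for even one fiber, one gets at least $a-1$ extra non-zeros beyond $d_{\min}$ from that fiber (or more precisely, the count of non-zeros goes up by a controlled amount), which pushes the weight past the target unless \emph{every} fiber is ``generic'', in which case the weight is exactly $d_{\min}$ and we are looking at a minimum-weight word, not a second-weight word.

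Concretely, the key step I expect to be the main obstacle is the following dichotomy for $F$ with $\deg_{X_1} F = d-1$: either $\varphi(F+I)$ has weight $d_{\min} = (a-(d-1))a^{n-1}$, or it has weight at least $(a-(d-1))(a-1)a^{n-2} + (\text{something} \ge 0)$. To prove this one counts zeros fiberwise over $\bb \in A_2\times\cdots\times A_n$: the number of zeros of $F(X_1,\bb)$ in $A_1$ is at most $d-1$ when $F(X_1,\bb)$ is a nonzero polynomial of degree $\le d-1$, hence the number of \emph{non}-zeros in that fiber is at least $a-(d-1)$; summing over all $a^{n-1}$ fibers gives $\ge (a-(d-1))a^{n-1}$, recovering $d_{\min}$. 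To improve this one shows: if $F(X_1,\bb) \equiv 0$ for some $\bb$ (i.e.\ a fiber is entirely zero), then $F$ is divisible modulo $I$ by a product $\prod_{j}(X_i - \a_{ij})$ in some variable $i\ge 2$, which forces $\lm(F)$ to be divisible by $X_i$ for that $i$ and contradicts $\lm(F) = X_1^{d-1}$ — so no fiber is entirely zero. Then, if some fiber $F(X_1,\bb_0)$ has strictly fewer than $d-1$ zeros in $A_1$ — equivalently contributes $\ge a - (d-2)$ non-zeros — the surplus over $d_{\min}$ is at least $1$ per such fiber; and a separate argument (the delicate one, using that the leading coefficient of $F$ as a polynomial in $X_1$, which is a nonzero polynomial in $X_2,\ldots,X_n$ of degree $\le d - (d-1) = 1$, vanishes on some sub-cartesian set of size at most $a^{n-2}$) shows that the fibers where $F(X_1,\bb)$ achieves the full $d-1$ zeros number at most $a^{n-1} - \big((a-1)a^{n-2} - \text{(zeros of leading coeff)}\big)$, and balancing these contributions yields the bound $(a-(d-1))(a-1)a^{n-2}$. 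Finally the bound is shown sharp by exhibiting $G = (X_1 - \a_{11})\cdots(X_1 - \a_{1,d-1}) + (X_2 - \a_{21})\cdot(\text{suitable adjustment})$, or more simply the polynomial whose leading monomial is $X_1^{d-2}X_2$ built as in Proposition \ref{thm1}'s sharpness argument, $G = \prod_{j=1}^{d-2}(X_1-\a_{1j})\cdot (X_2 - \a_{21})$, which has exactly $a^n - (a-(d-2))(a-1)a^{n-2} = a^n - (a-(d-1))(a-1)a^{n-2} - (a-1)a^{n-2}$ zeros — wait, one must check the arithmetic matches $(d-2,1,0,\ldots,0)$ giving $\prod(a-a_i) = (a-(d-2))(a-1)a^{n-2}$, which is \emph{larger} than the target, so the sharp example instead needs $\lm$-vector realizing $(a-(d-1))(a-1)a^{n-2}$; this is realized by $(d-1,1,0,\ldots,0)$ only if $d < a$ allows $a_1 = d-1, a_2 = 1$ with $a_1 + a_2 = d+1 > d$ — not allowed — so the honest sharp example is a non-monomial-leading construction, e.g.\ $F = \prod_{j=1}^{d-1}(X_1 - \a_{1j}) \cdot \prod_{j=1}^{1}\frac{(X_2-\a_{2j})}{\text{unit}}$ must be replaced by a codeword of the form $\big(\prod_{j=1}^{d-1}(X_1-\a_{1j})\big) + \lambda\big(\prod_{j=1}^{d-1}(X_2 - \a_{2j})\big)$ type, whose zero set is computed directly by inclusion–exclusion; getting this explicit witness right, so that it has precisely $a^n - (a-(d-1))(a-1)a^{n-2}$ zeros, is the remaining bookkeeping step.
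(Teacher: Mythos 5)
Your overall strategy matches the paper's in outline: bound $w(\varphi(F+I))$ from below by $\prod_i(a-a_i)$ via the footprint, use Lemma \ref{lema2} to see that among leading monomials with all exponents $a_i<\deg F$ the bound already reaches the target $(a-(d-1))(a-1)a^{n-2}$, and then fight the remaining case where $\lm(F)$ is a pure power of one variable. But there are two genuine problems. First, an off-by-one error runs through the write-up: the minimum distance of $C(d)$ here is $(a-d)a^{n-1}$, attained at exponent vector $(d,0,\ldots,0)$, not $(a-(d-1))a^{n-1}$ at $(d-1,0,\ldots,0)$; the dangerous leading monomials are $X_j^t$ for $2\le t\le d$, not $X_1^{d-1}$ (for which the footprint bound $(a-d+1)a^{n-1}$ already exceeds the target); and the minimizing vector in Lemma \ref{lema2} is $(d-1,1,0,\ldots,0)$, whose coordinate sum is $d$, not $d+1$ --- so it is allowed, the witness $\bigl(\prod_{j=1}^{d-1}(X_1-\a_{1j})\bigr)(X_2-\a_{21})$ is legitimate, and your entire final paragraph hunting for a ``non-monomial-leading construction'' is solving a problem that does not exist.

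Second, and more seriously, the heart of the proof --- the dichotomy that a codeword with $\lm(F)=X_j^t$ has weight either exactly $(a-t)a^{n-1}$ or at least $(a-1)(a-t+1)a^{n-2}$ --- is not actually established. You correctly flag it as ``the delicate one'' but then only gesture at a fiberwise count whose decisive step (``balancing these contributions yields the bound'') is missing: gaining one extra non-zero per deficient fiber does not by itself close the gap of $(d-1)a^{n-2}$ between $(a-d)a^{n-1}$ and the target, and the claim that an identically vanishing fiber would force $\lm(F)$ to be divisible by some $X_i$ with $i\ge 2$ is unsupported. The paper closes this case by a different and complete argument: if $\{F,f_1,\ldots,f_n\}$ is a Gr\"obner basis of $J_F$ the weight is exactly $(a-t)a^{n-1}$; otherwise the $S$-polynomial $S(F,f_j)$ leaves a nonzero remainder whose leading monomial $X_1^{b_1}\cdots X_n^{b_n}$ satisfies $b_j<t$, $b_i<a$ for $i\ne j$ and $\sum_i b_i\le a$, so the footprint shrinks by at least $(t-b_j)\prod_{i\ne j}(a-b_i)$, and Lemma \ref{lema1} (applied with $d_1=t$, $d_i=a$ for $i\ge2$, $s=a$) bounds that shrinkage below by $(t-1)a^{n-2}$, giving $w\ge(a-1)(a-t+1)a^{n-2}$. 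Some argument of this strength (or a genuinely completed fiber analysis) is indispensable; as written, the proposal does not prove the theorem.
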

\begin{proof} We write $A_i = \{ \a_{i 1}, \ldots, \a_{i a} \}$ for all $i = 1, \ldots, n$, and let $1 \leq t < a$.  Let $F \in \mathbb{F}_q[X_1, \ldots, X_n]$ 
be a polynomial of degree $t$ 
and let $J_F = (F, f_1, \ldots, f_n)$. 
As in the proof of Proposition \ref{thm1} we have that $w(\varphi(F + I) ) = \prod_{i = 1}^n d_i  - \#(V_{\mathbb{F}_q}(J_F) )$. Let $M := X_1^{a_1} . \cdots . X_n^{a_n}$ be the leading monomial of $F$ (so that $\sum_{i = 1}^n a_i = t$ because we are using the graded-lexicographic order). We deal first with the case where $t \geq 2$.\\
a) Assume that $a_i < t$ for all $i = 1, \ldots, n$. From 
$$\#(V_{\mathbb{F}_q}(J_F) ) \leq \#(\Delta(J_F) ) \leq \#( \Delta(M, X_1^{d_1}, \ldots, X_n^{d_n}) ) = \prod_{i = 1}^n d_i - \prod_{i = 1}^n (d_i - a_i) $$ 
and Lemma \ref{lema2} we get $w(\varphi(F + I)) \geq (d_1 - (t -1)) (d_2 -1) \prod_{i = 3}^n d_i$.  This bound is effectively attained, for example, when $F = \left( \prod_{i = 1}^{t - 1} (X_1 - \a_{1 i}) \right)  (X_2 - \a_{2 1})$. \\ 
b)  Assume now that $a_j = t$ for some $j \in \{ 1, \ldots, n\}$. If  $\{F, f_1, \ldots, f_n\}$ is a Gr\"obner basis for 
$J_F$ then $\#( \Delta(J_F) ) = t a^{n - 1}$ and  $w(\varphi(F + I) ) = a^n - t a^{n - 1} = (a - t)a^{n - 1}$;   
from  Proposition \ref{thm1} we get that this is the minimum distance of $C(t)$. If  $\{F, f_1, \ldots, f_n\}$ is not a Gr\"obner basis for  $J_F$ then the $S$-polynomial $S(F, f_j) = X_j^{a - t} F - f_j$ must have a nonzero remainder $R$ in the division by $\{F, f_1,\ldots, f_n\}$ (otherwise $\{F, f_1, \ldots, f_n\}$ would be a Gr\"obner basis because any other pair of distinct polynomials $\{g_1, g_2 \}$ in $\{F, f_1, \ldots, f_n\}$ has leading monomials which are relatively prime - see \cite[pags.\ 103 and 104]{iva}). Let $L := X_1^{b_1} . \cdots . X_n^{b_n}$ be the leading monomial of $R$, from the division algorithm we get $b_j < t$, $b_i < a$  for all $i \in \{1, \ldots, n\}$, $i \neq j$ and $\sum_{i = 1}^n b_i \leq \deg(S(F,f_j)) \leq a$.  Thus $J_F = (F, f_1, \ldots, f_n) = (R, F, f_1, \ldots, f_n) $ so that 
\[ 
\#(\Delta(J_F)) \leq  \#(\Delta(L, X_j^t, X_1^a, \ldots, X_n^a) ) = t a^{n - 1} - (t - b_j) \prod_{i = 1, i \neq j}^n(a - b_i) 
\]
Now we apply Lemma \ref{lema1} with $d_1 = t$, $d_i = a$ for $i = 2,\ldots, n$ and $s = a$, and 
writing $a = (t - 1) + (a - (t - 1))$ we get that an upper bound for 
the number of zeroes of $F$ in $X$ is $t a^{n - 1} - (t - 1) a^{n - 2}$  so the minimum distance of $\varphi(F + I)$ is lower bounded by $a^n - t a^{n - 1} + (t - 1) a^{n - 2} = (a - 1)(a - t + 1) a^{n - 2}$. This proves that for $2 \leq t < a$  the possible values for $w(F + I)$, where $F$ is a polynomial of degree $t$ are in the set $\{(a - t)a^{n - 1} \} \cup \{ w \in \mathbb{N} \; | \, w \geq (a - 1)(a - t + 1) a^{n - 2} \}$ where $(a - t)a^{n - 1}$ and $(a - 1)(a - t + 1) a^{n - 2}$ are realized as weights.

In the case where $t = 1$ we have  $M = X_j$ for some $j \in \{1, \ldots, n\}$ so that 
$\#(\Delta(M, X_1^a, \ldots, X_n^a) = a^n - (a - 1)a^{n -1}$, thus 
$w(F + I) \geq (a - 1)a^{n -1}$.

Now we put the above results together to calculate the second smallest weight of $C(d)$, where $2 \leq d < a$, and find that it is equal to $(a - 1)(a - d + 1) a^{n - 2}$. This is because $(a - 1)(a - d + 1) a^{n - 2} < (a - 1)(a - t + 1) a^{n - 2}$ and $(a - 1)(a - d + 1) a^{n - 2} < (a - t) a^{n - 1}$ for all 
$1 \leq t < d$, and of course $(a - d)a^{n - 1} <(a - 1)(a - d + 1) a^{n - 2}$.
\end{proof}

Setting  $a = q$ in the above theorem we get the values for the second Hamming weight of the generalized Reed-Muller codes when $2 \leq d < q$ (cf.\ \cite{geil}).

In the next theorem we treat the case where we have the cartesian product of  two subsets of $\fq$ with distinct cardinalities.

\begin{theorem}\label{tm2.5} Let $A_1, A_2 \subset \fq$ be such that  $3 \leq \#(A_1) =: d_1 < d_2 := \#(A_2)$ and let $2 \leq d < d_1$. The second Hamming weight of $C(d)$ is $(d_1 - d + 1))(d_2 -1)$. 
\end{theorem}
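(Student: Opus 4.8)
The plan is to mimic the structure of the proof of Theorem \ref{thm2}, specializing to $n = 2$ but allowing $d_1 < d_2$. Write $A_1 = \{\alpha_{11}, \ldots, \alpha_{1 d_1}\}$ and $A_2 = \{\alpha_{21}, \ldots, \alpha_{2 d_2}\}$. For a polynomial $F$ of degree $t$ with $1 \leq t < d_1$, set $J_F = (F, f_1, f_2)$; as in Proposition \ref{thm1} we have $w(\varphi(F + I)) = d_1 d_2 - \#(V_{\mathbb{F}_q}(J_F))$, and the task is to bound $\#(V_{\mathbb{F}_q}(J_F)) \leq \#(\Delta(J_F))$ from above by analyzing the footprint. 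Let $M = X_1^{a_1} X_2^{a_2}$ be the leading monomial of $F$, so $a_1 + a_2 = t$.

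First I would dispose of the easy sub-cases. If $t = 1$, then $M = X_1$ or $M = X_2$, and $\#(\Delta(M, X_1^{d_1}, X_2^{d_2}))$ is either $(d_1 - 1)d_2$ or $d_1(d_2 - 1)$; the smaller of these, $(d_1 - 1)d_2$ (since $d_1 < d_2$ forces $(d_1-1)d_2 < d_1(d_2-1)$ iff... actually one checks $(d_1-1)d_2$ vs $d_1(d_2-1)$ carefully, the relevant bound being whichever is attained), gives a weight lower bound that will turn out to exceed the claimed second weight. For $t \geq 2$, split as in Theorem \ref{thm2}: case (a), $a_1 < t$ and $a_2 < t$ (so both exponents are positive), where Lemma \ref{lema2} with $n = 2$, $s = t$ gives $\#(\Delta(M, X_1^{d_1}, X_2^{d_2})) = d_1 d_2 - \prod_{i=1}^2(d_i - a_i) \leq d_1 d_2 - (d_1 - (t-1))(d_2 - 1)$, hence $w(\varphi(F+I)) \geq (d_1 - t + 1)(d_2 - 1)$, attained by $F = \left(\prod_{i=1}^{t-1}(X_1 - \alpha_{1i})\right)(X_2 - \alpha_{21})$; and case (b), $M = X_j^t$ for $j \in \{1, 2\}$.

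In case (b) I would follow Geil's $S$-polynomial argument exactly as in Theorem \ref{thm2}. If $\{F, f_1, f_2\}$ is a Gröbner basis for $J_F$, the footprint has size $t \cdot d_{j'}$ (where $j'$ is the other index) when $M = X_j^t$ — more precisely $\#(\Delta(J_F)) = t\, d_2$ if $M = X_1^t$ and $= t\, d_1$ if $M = X_2^t$ — giving $w(\varphi(F+I))$ equal to $(d_1 - t)d_2$ or $d_1(d_2 - t)$, which by Proposition \ref{thm1} is the minimum distance of $C(t)$. If $\{F, f_1, f_2\}$ is not a Gröbner basis, the $S$-polynomial $S(F, f_j) = X_j^{d_j - t} F - f_j$ has nonzero remainder $R$ with leading monomial $L = X_1^{b_1} X_2^{b_2}$ satisfying $b_j < t$, $b_{j'} < d_{j'}$, and $b_1 + b_2 \leq \deg S(F, f_j) \leq d_j$; then $J_F = (R, F, f_1, f_2)$ and $\#(\Delta(J_F)) \leq \#(\Delta(L, X_j^t, X_1^{d_1}, X_2^{d_2}))$. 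Applying Lemma \ref{lema1} with the appropriate parameters (for $M = X_1^t$: ambient exponent bounds $t, d_2$ and budget $s = d_1$; for $M = X_2^t$: bounds $t, d_1$ and budget $s = d_2$) and splitting the budget as in Theorem \ref{thm2}, one gets a weight lower bound of the form $(d_{j'} - 1)(\text{something} \geq 1)\cdot(\ldots)$ that I expect to be at least $(d_1 - 1)(d_2 - t + 1)$ or similar — in any event strictly larger than $(d_1 - d + 1)(d_2 - 1)$ for $t \leq d$.

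\textbf{Main obstacle.} The delicate point is the final bookkeeping: collecting all the weight values produced above — the genuine minima $(d_1 - t)d_2$ and $d_1(d_2 - t)$ for $1 \leq t \leq d$, the case-(a) bounds $(d_1 - t + 1)(d_2 - 1)$, and the case-(b) non-Gröbner bounds — and verifying that among all weights of codewords coming from polynomials of degree $\leq d$, the smallest is the minimum distance $d_{\min} = (d_1 - d)d_2$ (from Proposition \ref{thm1}, since $d < d_1 \leq d_2$ puts us in the $k = 0$, $\ell = d$ regime) and the \emph{second} smallest is $(d_1 - d + 1)(d_2 - 1)$. This requires the chain of inequalities $(d_1 - d)d_2 < (d_1 - d + 1)(d_2 - 1) < \min\{(d_1 - t)d_2,\ d_1(d_2 - t),\ (d_1 - t + 1)(d_2 - 1) : 1 \leq t < d\}$ together with the non-Gröbner lower bounds all exceeding $(d_1 - d + 1)(d_2 - 1)$. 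The first inequality $(d_1 - d)d_2 < (d_1 - d + 1)(d_2 - 1)$ reduces to $d_2 - d_1 + d < d_2$, i.e. $d < d_1$, which holds; the comparisons with the $t < d$ weights use monotonicity in $t$ and the hypothesis $d_1 \geq 3$ (needed to keep $(d_1 - t + 1)(d_2 - 1)$ above the target when $t = d - 1$). I would organize this as a short explicit inequality check at the end, exactly paralleling the closing paragraph of the proof of Theorem \ref{thm2}, with the extra care that $d_1 \neq d_2$ introduces two distinct "minimum-type" families instead of one.
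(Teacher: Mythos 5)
Your plan follows the paper's proof of this theorem essentially step for step: the same reduction to $w(\varphi(F+I)) = d_1d_2 - \#(V_{\mathbb{F}_q}(J_F))$, the same split into case (a) (both exponents of $\lm(F)$ less than $t$, handled by Lemma \ref{lema2}, with the same extremal polynomial $\bigl(\prod_{i=1}^{t-1}(X_1-\alpha_{1i})\bigr)(X_2-\alpha_{21})$) and case (b) ($\lm(F)=X_j^t$, handled by the $S$-polynomial/footprint argument), and the same closing inequality chain. So the approach is the right one. However, there is a concrete gap precisely at the step you leave as ``applying Lemma \ref{lema1} with the appropriate parameters'': for $M = X_2^t$ you propose to apply Lemma \ref{lema1} with box bounds $t, d_1$ and budget $s = d_2$, but Lemma \ref{lema1} requires $s \leq (t-1)+(d_1-1) = d_1+t-2$, and $d_2$ can exceed $d_1+t-2$ (e.g.\ $d_1=3$, $t=2$, $d_2$ large). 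The paper handles this by replacing the budget with $r = \min\{d_2,\, d_1+t-2\}$ (the extra constraint coming from $b_1<d_1$, $b_2<t$) and splitting into two subcases, which yield the lower bounds $d_1(d_2-t)+1$ and $(d_1-1)(d_2-t+1)$ respectively; both happen to dominate $(d_2-1)(d_1-t+1)$, but the second of these comparisons, $(d_2-1)(d_1-t+1) \leq (d_1-1)(d_2-t+1)$, reduces to $(t-2)(d_2-d_1)\geq 0$ and the first, $(d_2-1)(d_1-t+1)\leq d_1(d_2-t)$, to $(t-1)(d_2-d_1-1)\geq 0$ --- both of which genuinely use $d_1 < d_2$ and are not mere monotonicity. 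As stated, your plan would stall at this point, and your guess that the bound is ``at least $(d_1-1)(d_2-t+1)$ or similar'' is only confirmed after this subcase analysis is actually carried out.

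Two smaller points. First, in the $t=1$ discussion the footprint sizes are $d_2$ and $d_1$, not $(d_1-1)d_2$ and $d_1(d_2-1)$; the latter are the resulting weight bounds (your intent is clear, but the statement as written conflates the two). Second, the final bookkeeping is not quite as you describe: the comparison $(d_1-d+1)(d_2-1) < (d_1-t+1)(d_2-1)$ for $t<d$ is immediate monotonicity and does not need $d_1\geq 3$; what the paper actually has to verify is that \emph{every} case-(b) weight other than the minimum-distance value $(d_1-t)d_2$ lies at or above the single threshold $(d_2-1)(d_1-t+1)$, so that for each $t$ the weight set collapses to $\{(d_1-t)d_2\}\cup\{v \geq (d_2-1)(d_1-t+1)\}$ --- in particular the second family $d_1(d_2-t)$ you flag as an extra ``minimum-type'' value is absorbed into the threshold rather than tracked separately.
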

\begin{proof}
We follow the same procedure of the above proof, and although the beginning is similar the development is a bit more elaborate. 
We write $A_i = \{ \a_{i 1}, \ldots, \a_{i d_i} \}$ for $i = 1, 2$, and let $1 \leq t < d_1$.  Let $F \in \mathbb{F}_q[X_1, X_2]$ 
be a polynomial of degree $t$ 
and let $J_F = (F, f_1, f_2)$. Then  
$w(\varphi(F + I) ) \geq d_1 d_2  - \#(\Delta(J_F) )$. Let $M := X_1^{a_1} .  X_2^{a_2}$ be the leading monomial of $F$ (hence $a_1 + a_2 = t$). We deal first with the case where $t \geq 2$.\\
a) Assume that $a_i < t$ for $i = 1, 2$. From $\#(\Delta(J_F) ) \leq \#( \Delta(M, X_1^{d_1}, X_2^{d_2}) ) = d_1 d_2 - \prod_{i = 1}^2 (d_i - a_i) $ and Lemma \ref{lema2} we get $w(\varphi(F + I)) \geq (d_1 - (t -1)) (d_2 -1)$.  This bound is effectively attained, for example, when $F = \left( \prod_{i = 1}^{t - 1} (X_1 - \a_{1 i}) \right)  (X_2 - \a_{2 1})$. \\ 
b) Assume now that $a_j = t$ for $j = 1$ or $j = 2$. If  $\{F, f_1,  f_2\}$ is a Gr\"obner basis for 
$J_F$ then $\#( \Delta(J_F) ) = t d_2$, if $a_1 = t$ or $\#( \Delta(J_F) ) = t d_1$, if $a_2 = t$ so that  $w(\varphi(F + I)) \geq d_1 d_2 - t d_2$ if $a_1 = t$ or $w(\varphi(F + I)) \geq d_1 d_2 - t d_1$ if $a_2 = t$. According to Proposition \ref{thm1} $(d_1 - t) d_2$  is the minimum distance of $C(t)$, and it is easy to check that $(d_2 - t) d_1$ is also realized as the weight of a codeword. We assume now that $\{F, f_1,  f_2\}$ is not a Gr\"obner basis for 
$J_F$, and we treat separatedly the cases where $M = X_1^t$ and $M = X_2^t$. 

When $M = X_1^t$ we must have that the $S$-polynomial $S(F, X_1) = X_1^{d_1 - t}F - X_1^{d_1}$ has a nonzero remainder in the division by $\{F, X_1^{d_1}, X_2^{d_2} \}$ (because $X_1^t$ and $X_2^{d_2}$ are relatively prime), so let $L := X_1^{b_1} X_2^{b_2}$ be the leading monomial of this remainder. From the division algorithm we get $b_1 < t$, $b_2 < d_2$ and $b_1 + b_1 \leq d_1$.  
We have $\#( \Delta(J_F) )  \leq \#( \Delta(L, M, X_1^{d_1}, X_2^{d_2}) ) = t d_2 - (t - b_1)(d_2 - b_2)$ so $w(\varphi(F + I)) \geq d_1 d_2 - t d_2 + (t - b_1)(d_2 - b_2)$. We now use Lemma \ref{lema1} to find the minimum of $(t - b_1)(d_2 - b_2)$, observing the restrictions on $b_1 $ and $b_2$, and get  
$w(\varphi(F + I)) \geq d_1 d_2 - t d_2 + d_2 - d_1 + t - 1 = (d_2 - 1)(d_1 - t + 1)$.

When $M = X_2^t$ we have that the $S$-polynomial $S(F, X_2) = X_2^{d_2 - t}F - X_2^{d_2}$ has a nonzero remainder in the division by $\{F, X_1^{d_1}, X_2^{d_2} \}$ and again we denote by $L = X_1^{b_1} X_2^{b_2}$ the leading monomial of this remainder. From the division algorithm we get  $b_1 < d_1$, $b_2 < t$ and $b_1 + b_2 \leq d_2$, but from $b_1 < d_1$ and $b_2 < t$ we also get $b_1 + b_2 \leq d_1 + t - 2$, thus $b_1 + b_2 \leq  r := \min \{d_2, d_1 + t - 2 \}$. As before we note that $\#( \Delta(J_F) )  \leq \#( \Delta(L, M, X_1^{d_1}, X_2^{d_2}) ) = t d_1 - (d_1 - b_1)(t - b_2)$ so that 
$w(\varphi(F + I)) \geq d_1 d_2 - t d_1 + (d_1 - b_1)(t - b_2)$. Now we want to apply Lemma \ref{lema1} to find the minimum of $(t - b_2)(d_1 - b_1)$, observing the restrictions on $b_1$ and $b_2$. If $r = d_1 + t - 2$ then from $d_1 + t - 2 = (t - 1) + (d_1 - 1)$ we get that the minimum is 1, hence  
$w(\varphi(F + I)) \geq d_1 (d_2 - t) + 1$. If $r = d_2$ then $d_2 \leq d_1 + t - 2$ so $d_2 - t + 1 \leq d_1 - 1$, thus from $d_2 = (t - 1) + d_2 - t + 1$ and Lemma \ref{lema1} we get that the minimum is  $d_1 - d_2 + t - 1$, which implies that  $w(\varphi(F + I)) \geq (d_1 - 1)(d_2 - t + 1)$.

This completes the analysis of the case where $t\geq 2$. In the case where $t = 1$ we have that  either $w(\varphi(F + I)) \geq (d_1 - 1) d_2$ or $w(\varphi(F + I)) \geq d_1 (d_2 - 1)$.  

From what is done so far we get that if $2 \leq t < d_1$ then $ w(\varphi(F + I)) \in \{(d_1 - t) d_2\} \cup \{ v \in \mathbb{N} \, | \, v \geq (d_2 - 1)(d_1 - t + 1) \}$ because $ (d_2 - 1)(d_1 - t + 1) - d_1(d_2 - t)  = -(t - 1)(d_2 - d_1 - 1) \leq 0$ and $ (d_2 - 1)(d_1 - t + 1) -  (d_1 - 1)(d_2 - t + 1) = -(t - 2)(d_2 - d_1) \leq 0$. 

Thus considering the weights $w(\varphi(F + I))$ for all polynomials $F$  of degree less of equal than $d$ (where $2 \leq d < d_1$) we get that the second smallest weight is $(d_2 - 1)(d_1 - d + 1)$, this is because   $(d_2 - 1)(d_1 - d + 1) < (d_2 - 1)(d_1 - t + 1)$ and  
$(d_2 - 1)(d_1 - d + 1) < (d_1 - t) d_2$ whenever $1 \leq t < d$, and $(d_1 - d) d_2 < (d_2 - 1)(d_1 - d + 1)$.  
\end{proof}

The following result deals with higher Hamming weights of the code $C(d)$.
The proof is an enhancement of the proof of \cite[Thm.\ 4]{geil}.

\begin{theorem}\label{thm3}
Let $2 \leq d_1 \leq \cdots \leq d_n$ be integers, with $n \geq 2$, and let $d$ be an integer such that $\sum_{i = 1}^{n - 1} (d_i - 1) \leq d < \sum_{i = 1}^n (d_i - 1)$. Write $d = \sum_{i = 1}^{n - 1} (d_i - 1) + \ell$, with $0 \leq \ell < d_{n} - 1$. Then for $t \in \{1, \ldots, \ell + 1\}$ the $t$-th weight of $C(d)$ is $d_n - \ell + (t - 1)$.
\end{theorem}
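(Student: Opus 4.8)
The plan is to combine Proposition \ref{thm1} with an explicit family of codewords. First I would observe that, with $d = \sum_{i=1}^{n-1}(d_i - 1) + \ell$ and $0 \le \ell < d_n - 1$ as in the statement, we are exactly in the situation of Proposition \ref{thm1} with $k = n - 1$, so the minimum distance of $C(d)$ equals $(d_n - \ell)\prod_{i = n+1}^{n} d_i = d_n - \ell$. Since all codeword weights are integers, it then suffices to exhibit, for each integer $w$ with $d_n - \ell \le w \le d_n$, a codeword of $C(d)$ of weight exactly $w$: these form $\ell + 1$ consecutive integers, none below the minimum distance, so they must be the $\ell + 1$ smallest weights occurring in $C(d)$, which is precisely the claim.

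For the construction, write $A_i = \{\alpha_{i1}, \ldots, \alpha_{i d_i}\}$ for $i = 1, \ldots, n$. For each $r \in \{0, 1, \ldots, \ell\}$ I would pick a univariate polynomial $h_r(X_n) \in \fq[X_n]$ of degree at most $\ell$ with exactly $r$ distinct roots in $A_n$; one may take $h_0 = 1$ and, for $r \ge 1$,
\[
h_r(X_n) = (X_n - \alpha_{n1})^{\ell - r + 1}\prod_{s = 2}^{r}(X_n - \alpha_{ns}),
\]
which has degree $\ell$ and root set $\{\alpha_{n1}, \ldots, \alpha_{n r}\} \subset A_n$. (Allowing a repeated root is what makes this work uniformly, in particular when $A_n = \fq$, where there is no field element outside $A_n$ available for padding.) Setting
\[
F_r = \Bigl(\prod_{i = 1}^{n - 1}\prod_{s = 1}^{d_i - 1}(X_i - \alpha_{is})\Bigr)\, h_r(X_n),
\]
we have $\deg F_r \le \sum_{i=1}^{n-1}(d_i - 1) + \ell = d$, so $\varphi(F_r + I) \in C(d)$; and a point $(x_1, \ldots, x_n) \in X$ satisfies $F_r(x_1,\ldots,x_n) \ne 0$ exactly when $x_i = \alpha_{i d_i}$ for all $i \le n - 1$ and $x_n$ is not a root of $h_r$, so $\varphi(F_r + I)$ has weight $d_n - r$.

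Running $r$ through $\{0, 1, \ldots, \ell\}$ then yields codewords of weights $d_n, d_n - 1, \ldots, d_n - \ell$, i.e.\ all integers from $d_n - \ell$ to $d_n$, and combined with the equality $d_{\min}(C(d)) = d_n - \ell$ this shows that the $t$-th weight of $C(d)$ is $d_n - \ell + (t - 1)$ for every $t \in \{1, \ldots, \ell + 1\}$. I do not expect a real obstacle: the points needing care are only that $F_r$ has degree at most $d$ (hence defines a codeword of $C(d)$) and that a suitable $h_r$ exists in every corner case ($\ell = 0$, or $A_n = \fq$), both handled by the explicit formulas; the essential ingredient is Proposition \ref{thm1}, which fixes the minimum distance and thereby guarantees that the consecutive weights just constructed are indeed the smallest ones.
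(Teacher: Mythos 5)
Your proof is correct and is essentially the paper's argument: both rest on Proposition \ref{thm1} together with the observation that exhibiting codewords of the consecutive integer weights $d_n-\ell, d_n-\ell+1, \ldots, d_n$, the smallest of which is the minimum distance, forces these to be the first $\ell+1$ weights. The only cosmetic difference is that the paper obtains these codewords by applying Proposition \ref{thm1} to the nested subcodes $C(d-(t-1)) \subset C(d)$, whereas you write down the minimum-weight polynomials explicitly (your degree-padding by a repeated root is harmless but unnecessary, since a codeword of $C(d)$ only needs degree at most $d$).
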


\begin{proof} For $t \in \{1, \ldots, \ell + 1\}$ we have $C(d - (t - 1)) \subset C(d)$ so from Proposition \ref{thm1} we get that in $C(d)$ there are words of weight $d_n - \ell, d_n - \ell + 1, \ldots, d_n$, being $d_n - \ell$ the minimum distance of $C(d)$. This proves the theorem.
\end{proof}

We now put the last three results together to determine the second Hamming weight of $C(d)$, for all $d \geq 2$, in the case where we have the cartesian product of two sets containing at least three elements each.

\begin{corollary}\label{cor-last}
Let $A_1, A_2 \subset \fq$ be such that  $3 \leq \#(A_1) =: d_1 \leq d_2 := \#(A_2)$ and let $2 \leq d$. Then  second Hamming weight of $C(d)$ is equal to: \\
i) $(d_1 - d + 1)(d_2 - 1)$ if $2 \leq d < d_1$; \\
ii) $d_1 + d_2 - d$ if $d_1 \leq d \leq d_1 + d_2 - 2$; \\
iii) 2 if $d_1 + d_2 - 2 < d$.  
\end{corollary}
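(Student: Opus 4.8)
The plan is to reduce Corollary~\ref{cor-last} to a direct combination of Theorem~\ref{tm2.5}, Theorem~\ref{thm3}, and Proposition~\ref{thm1}, all specialised to $n = 2$. Part~(i) is literally the statement of Theorem~\ref{tm2.5} when $d_1 < d_2$; I would first address the remaining case $d_1 = d_2$ of part~(i), which is not covered by Theorem~\ref{tm2.5} but by the $a = d_1$ instance of Theorem~\ref{thm2} (which gives second weight $(a - (d-1))(a-1)a^{n-2} = (d_1 - d + 1)(d_1 - 1)$ for $n = 2$). So (i) follows by citing Theorem~\ref{tm2.5} if $d_1 < d_2$ and Theorem~\ref{thm2} if $d_1 = d_2$.

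For part~(ii), the range $d_1 \le d \le d_1 + d_2 - 2 = \sum_{i=1}^2(d_i - 1)$ is exactly the hypothesis range of Theorem~\ref{thm3} with $n = 2$: there $\sum_{i=1}^{n-1}(d_i - 1) = d_1 - 1 \le d < d_1 + d_2 - 2$, and writing $d = (d_1 - 1) + \ell$ with $0 \le \ell < d_2 - 1$ gives, for $t = 2$ (which lies in $\{1,\dots,\ell+1\}$ provided $\ell \ge 1$), that the second weight is $d_2 - \ell + 1 = d_2 - (d - d_1 + 1) + 1 = d_1 + d_2 - d$. I would need to handle the boundary value $d = d_1$ (where $\ell = 0$, so $t = 2$ is not allowed by Theorem~\ref{thm3}) separately: there the minimum distance is $d_2$ by Proposition~\ref{thm1}, and I must exhibit a codeword of weight $d_1 + d_2 - d_1 = d_2$ that is not of minimal type — but in fact $d_1 + d_2 - d_1 = d_2$ equals the minimum distance, so for $d = d_1$ one checks the second weight directly: Proposition~\ref{thm1} gives $d_{\min} = d_2$, and a codeword of weight $d_2$ distinct from a minimum-weight one (or simply the existence of the next weight value) is produced by the polynomial $\prod_{i=1}^{d_1-1}(X_1 - \alpha_{1i})$, which gives weight $(d_1 - (d_1 - 1))d_2 \cdot$ wait — I should instead use $C(d - 1) \subset C(d)$: since $d - 1 = d_1 - 1 < d_1 + d_2 - 2$, by Proposition~\ref{thm1} the code $C(d-1)$ has a word of weight $(d_2 - 0)\cdot 1 = d_2$ as well, so I would argue that when $d = d_1$ the formula still reads $d_1 + d_2 - d = d_2$ and agrees with the $\ell = 0$ continuation. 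The cleanest route is to observe that for all $d$ in the closed range $d_1 \le d \le d_1 + d_2 - 2$ one has, by Proposition~\ref{thm1}, $d_{\min}(C(d)) = d_2 - \ell$ with $\ell = d - d_1 + 1$ when $\ell \ge 1$ and $d_{\min} = d_2$ when $d = d_1$, and then invoke Theorem~\ref{thm3}.

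For part~(iii), when $d > d_1 + d_2 - 2 = \sum_{i=1}^2(d_i - 1)$, the discussion right after the footprint computation in Section~2 shows $C(d) = \mathbb{F}_q^m$ with $m = d_1 d_2$, so $C(d)$ is the full space and its second Hamming weight is trivially $2$: take the codeword with exactly two nonzero coordinates, which exists since $\varphi$ is an isomorphism. I expect the only genuinely fiddly point to be the bookkeeping at the two interface values $d = d_1 - 1 \to d_1$ (transition between (i) and (ii)) and $d = d_1 + d_2 - 2 \to d_1 + d_2 - 1$ (transition between (ii) and (iii)), where one must confirm that the piecewise formulas match up with the $\ell = 0$ edge cases of Theorem~\ref{thm3} and with Proposition~\ref{thm1}; verifying $d_1 + d_2 - d = 2$ at $d = d_1 + d_2 - 2$ and that this is consistent with the full-space case just above it is the kind of routine check I would spell out but not belabour. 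The main conceptual content is entirely in the already-proved Theorems~\ref{thm2}, \ref{tm2.5}, \ref{thm3}; the corollary is their assembly.
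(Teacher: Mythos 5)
Your proposal is correct and follows the paper's own proof essentially verbatim: item (i) from Theorems \ref{thm2} (for $d_1 = d_2$) and \ref{tm2.5} (for $d_1 < d_2$), item (ii) from Theorem \ref{thm3} via $d = (d_1 - 1) + \ell$, and item (iii) from $C(d) = \mathbb{F}_q^m$. The only blemish is your claimed boundary trouble at $d = d_1$: there $\ell = d - d_1 + 1 = 1$, not $0$, so $t = 2$ already lies in $\{1,\dots,\ell+1\}$ and your digression is unnecessary — the genuine edge case is instead $d = d_1 + d_2 - 2$, which is excluded by the strict inequality in Theorem \ref{thm3}'s hypothesis but is covered by the full-space observation, yielding second weight $2 = d_1 + d_2 - d$ as required.
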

\begin{proof}
Item (i) is a direct consequence of Theorems \ref{thm2} and \ref{tm2.5}. Item (ii) is a consequence of the above theorem, because writing $d = (d_1 - 1) + \ell$ we get that the second weight is $d_2 - \ell + 1 = d_1 + d_2 - d$. Item (iii) comes from the fact that  $C(d) =\mathbb{F}_q^m$ whenever $d \geq d_1 + d_2 - 2$ as observed just before Lemma \ref{lema1} (this is also proved in \cite{lrv}).
\end{proof}
\vspace{3ex}

\end{document}